\patchcmd{\@maketitle}{\LARGE \@title}{\fontsize{16}{19.2}\selectfont\@title}{}{}\makeatother
\theoremstyle{plain}
\newtheorem{theorem}{Theorem}[section]
\newtheorem{lemma}[theorem]{Lemma}
\newtheorem{observation}[theorem]{Observation}
\title{Routing in Polygonal Domains\footnote{
A preliminary version appeared as
B.~Banyassady, M-K.~Chiu, M.~Korman, W.~Mulzer, A.~v.~Renssen,
M.~Roeloffzen, P.~Seiferth, Y.~Stein, B.~Vogtenhuber,
and M.~Willert. \emph{Routing in Polygonal Domains.} Proc.~28th ISAAC,
pp.~10:1--10.13.
   BB was supported in part by DFG project MU/3501-2.
   MC, AvR and MR were supported by JST ERATO Grant Number JPMJER1201, Japan.  
   MK was supported in part by KAKENHI Nos. 15H02665 and 17K12635, Japan.
   WM was supported in part by ERC StG 757609.
   PS was supported in part by DFG project MU/3501-1.
   YS was supported by the DFG within the research training group 
   `Methods for Discrete Structures' (GRK 1408) and by GIF 
   grant 1161.}}    
\author[1]{Bahareh Banyassady}
\author[2]{Man-Kwun Chiu}
\author[3]{Matias Korman}
\author[1]{Wolfgang Mulzer}
\author[4]{\\Andr\'{e} van Renssen}
\author[5]{Marcel Roeloffzen}
\author[1]{Paul Seiferth}
\author[1]{\\Yannik Stein}
\author[6]{Birgit Vogtenhuber}
\author[1]{Max Willert}
\affil[1]{Institut f\"ur Informatik, Freie Universit\"at Berlin, Germany\\
\texttt{\{bahareh,mulzer,pseiferth,yannikstein,willerma\}@inf.fu-berlin.de}}
\affil[2]{Institut f\"ur Informatik, Freie Universit\"at Berlin, Germany\\
  \texttt{cmk.kenny@gmail.com}}
\affil[3]{Tohoku University, Sendai, Japan\\
  \texttt{mati@dais.is.tohoku.ac.jp}}
\affil[4]{The University of Sydney, Sydney, Australia\\
  \texttt{andre.vanrenssen@sydney.edu.au}}
\affil[5]{Department of Mathematics and Computer Science, TU Eindhoven, 
Eindhoven, the Netherlands\\
\texttt{m.j.m.roeloffzen@tue.nl}}
\affil[6]{Institute of Software Technology, Graz University of 
Technology, Graz, Austria\\
\texttt{bvogt@ist.tugraz.at}}
\date{}
\begin{document}
\maketitle

\newcommand{\mati}[1]{\niceremark{Matias}{#1}{red}}
\newcommand{\kenny}[1]{\niceremark{Kenny}{#1}{red}}
\newcommand{\maxe}[1]{\niceremark{Max}{#1}{blue}}
\newcommand{\wolfgang}[1]{\niceremark{Wolfgang}{#1}{red}}
\newcommand{\rewritten}[1]{{\color{red}#1}}
\newcommand{\niceremark}[3]{\textcolor{#3}{\textsc{#1:} \marrow\textsf{#2}}}
\newcommand{\marrow}{\marginpar[\hfill$\longrightarrow$]{$\longleftarrow$}}

\newcommand{\abs}[1]{\vert\overline{#1}\vert}
\newcommand{\Oe}{O}
\newcommand{\C}{\mathcal{C}}
\newcommand{\eps}{\varepsilon}
\newcommand{\VG}{\operatorname{VG}}
\newcommand{\innt}{\operatorname{int}}
\newcommand{\vis}{\operatorname{vis}}
\newcommand{\outdeg}{\operatorname{outdeg}}
\newcommand{\R}{\mathbb{R}}

\newcommand{\etal}{\emph{et al.}\xspace}

\begin{abstract}
We consider the problem of routing a data packet through
the visibility graph of a polygonal domain $P$ with $n$
vertices and $h$ holes.  We may preprocess $P$ to obtain 
a \emph{label} and a \emph{routing table} for each vertex of $P$. 
Then, we must be able to route a data packet between
any two vertices $p$ and $q$ of $P$, where each step 
must use only the label of the target node $q$
and the routing table of the current node. 

For any fixed $\varepsilon > 0$, 
we present a routing scheme that always achieves
a routing path whose length exceeds the shortest path by
a factor of at most $1 + \varepsilon$.
The labels have $O(\log n)$ bits, and the 
routing tables are of size $O((\varepsilon^{-1}+h)\log n)$. 
The preprocessing time is $O(n^2\log n)$.
It can be improved to $O(n^2)$ for 
simple polygons.
\end{abstract} 

\section{Introduction}
Routing is a crucial problem in distributed graph 
algorithms~\cite{GiordanoSt04,PelegUp89}. 
We would like to preprocess a given graph $G$ in 
order to support the following task: given a data packet
that lies at some \emph{source} node $p$ of $G$, route 
the packet to a given \emph{target} node $q$ in $G$ 
that is identified by its \emph{label}. We expect three 
properties from our routing scheme: first, it should be 
\emph{local}, i.e., in order to determine the next step 
for the packet, it should use only information 
stored with the current node of $G$ or with the packet 
itself. Second, the routing scheme should be \emph{efficient}, 
meaning that the packet should not travel much
more than the shortest path distance between $p$ and $q$. 
The ratio between the length of the routing path and the shortest 
path in the graph is also called \emph{stretch factor}. 
Third, it should be \emph{compact}: the total space requirement 
should be as small as possible.

Here is an obvious solution: for 
each node $v$ of $G$, we store at $v$ the complete shortest 
path tree for $v$. Thus, given the
label of a target node $q$, we can send the packet
for one more step along the shortest path from $v$ to $q$.
Then, the routing scheme will have perfect
efficiency, sending each packet along a shortest
path. However, this method requires that each node 
stores its entire shortest path tree, making it
not compact. Thus, the challenge 
lies in finding the right balance between
the conflicting goals of compactness and efficiency.

Thorup and Zwick introduced the notion of a 
\emph{distance oracle}~\cite{ThorupZw05}. 
Given a graph $G$, the goal is to construct a compact 
data structure to quickly answer \emph{distance queries} 
for any two nodes in $G$. A
routing scheme can be seen as a distributed implementation of a
distance oracle~\cite{RodittyTo16}. 

The problem of constructing a compact routing
scheme for a general graph has been studied 
for a long time~\cite{AbrahamGa11,AwerbuchBNLiPe90,Chechik13,Cowen01,
EilamGaPe03,RodittyTo15,RodittyTo16}. One of the most recent results,
by Roditty and Tov, dates from 2016~\cite{RodittyTo16}. 
They developed a routing scheme for a general graph $G$ 
with $n$ vertices and $m$ edges. Their scheme needs to store a 
poly-logarithmic number of bits with the packet, and
it routes a message from $p$ to $q$ on a path with length 
$\Oe(k\Delta+m^{1/k})$, where $\Delta$ is the 
shortest path distance between $p$ and $q$ and $k > 2$ is any 
fixed integer. 
The routing tables use $mn^{\Oe(1/\sqrt{\log n})}$ total space. 
In general graphs, any routing scheme with constant stretch factor
needs to store $\Omega(n^c)$ bits per node, for some constant 
$c>0$~\cite{PelegUp89}. Thus, it is natural to ask whether 
there are better algorithms for specialized graph classes. For 
instance, trees admit routing schemes that always 
follow the shortest path and that store $\Oe(\log n)$ bits at 
each node~\cite{FraigniaudGa01,SantoroKh85,ThorupZw01}. 
Moreover, in planar graphs,
for any fixed $\eps > 0$, there is a routing scheme with a 
poly-logarithmic number of bits in each routing table that 
always finds a path that is within a factor of 
$1 + \eps$ from optimal~\cite{Thorup04}.
Similar results are also available for
unit disk graphs~\cite{KaplanMuRoSe18}, and for metric spaces with 
bounded doubling dimension~\cite{KonjevodRiXi16}.

Another approach is called \emph{geometric routing}.
Here, the graph is embedded in a geometric space, and 
the routing algorithm has to determine the next vertex for 
the data packet based on the location of the source and the target 
vertex, the current vertex, and its neighbourhood, see 
for instance~\cite{BoseFavReVe17,BoseFavReVe15} and the references 
therein.  The most notable difference between geometric routing and 
our setting is that in geometric routing, vertices are generally 
not allowed to store routing tables, so that routing decisions 
are based solely on the geometric information available at the 
current vertex (and possibly information stored in the message). 
We note that the location of the source vertex may or may not be 
needed, depending on the routing algorithm. For example, the routing 
algorithm for triangulations by Bose and Morin~\cite{BoseMo04} uses 
the line segment between the source and the target for its 
routing decisions.  A recent result by 
Bose~\etal~\cite{BoseFavReVe17} is very close to our setting.
They show that when vertices do not store any routing tables 
(i.e., each vertex stores only the edges that can be followed 
from it), no geometric routing scheme can achieve stretch 
factor $o(\sqrt{n})$. This lower bound applies regardless of the 
amount of information that may be stores in the message. 

Here, we consider the class of visibility graphs of a 
polygonal domain. Let $P$ be such a polygonal domain with $h$ 
holes and $n$ vertices. 
Two vertices $p$ and $q$ in $P$ are connected by an edge if and only 
if they can \emph{see} each other, i.e., if and only if 
the line segment between $p$ and $q$ is contained in the 
(closed) region $P$. We note that this definition implies that the 
visibility graph contains the shortest path between any two vertices 
of the polygonal domain. The problem of computing a shortest 
path between two vertices in a polygonal domain has 
been well-studied in computational geometry~\cite{AsanoAsGuHeIm86,
BYCh94,GuibasHeLeShTa87,HershbergerSu99,KapoorMaMi97,KapoorMa88,
Mitchell91,Mitchell96,OvermarsWe88,SharirSc86,StorerRe94,Welzl85}. 
Nevertheless, to the best of our knowledge, prior to our work there 
have been no routing schemes for visibility graphs of polygonal 
domains that fall into our model.  

When we relax the requirement on the length of the path, we enter 
the domain of spanners: given a graph $G$, a subgraph $H$ of $G$ is 
a \emph{$k$-spanner} of $G$ if for all pairs of vertices $p$ and $q$
in $G$, $d_H(p, q) \leq k \cdot d_G(p, q)$, for $k \geq 1$. The 
spanning properties of various geometric graphs have been studied
extensively in the literature (see~\cite{BoseSm13,NarasimhanSm07} 
for a comprehensive overview). We briefly mention the results that 
are most closely related to the approach we will take here, namely 
Yao-graphs~\cite{Yao82} and 
$\Theta$-graphs~\cite{Clarkson87}. Intuitively, these graphs form 
geometric networks where each vertex connects to its nearest visible
vertex in a certain number of different directions (a formal 
definition is given in Section~\ref{sec:cones}). Both types of 
graphs are spanners, where the stretch factor depends on the number 
of cones used~\cite{BarbaBoCaReVe13,BarbaBoDaFaKeORoTaVeiX15,
BoseCaMoReSe16,BoseDaDoORSeSmWu12,BoseMoReVe15,DamianRa12,DamianNe17}. 
These graphs have also been considered for geometric routing 
purposes. For example, Bose~\etal~\cite{BoseFavReVe15} gave an 
optimal geometric routing algorithm for the half-$\Theta_6$-graph 
(the $\Theta$-graph with six cones where edges are added in every 
other cone). When considering obstacles, $\Theta$-graphs have 
recently been used to route on (subgraphs of) the visibility 
graph~\cite{BoseFavReVe17,BoseKovReVe17b,BoseKovReVe17}, though 
these algorithms do not provide a bound on the total length of the 
routing path, only on the number of edges followed by the routing 
scheme. However, as mentioned earlier, these geometric routing 
schemes cannot achieve a stretch factor of $o(\sqrt{n})$, as they 
are not allowed to store routing tables at the vertices. 

We introduce a routing scheme that, for any $\eps > 0$, needs 
$\Oe((1/\eps + h)\log n)$ bits in each routing table, and 
for any two vertices $p$ and $q$, it
produces a routing path that is within a factor of $1 + \eps$ of 
optimal.  This shows that by allowing a routing table 
at each vertex, we can do much better than in traditional
geometric routing, achieving a stretch factor that is arbitrarily close
to $1$.


\section{Preliminaries}

Let $G = (V, E)$ be an \emph{undirected}, \emph{connected} and 
\emph{simple} graph. In our model, $G$ is embedded in the Euclidean 
plane: a \emph{node} $p = (p_x,p_y)\in V$ corresponds to a point 
in the plane, and an edge $\{p,q\}\in E$ is represented by the 
line segment $\overline{pq}$. The \emph{length} $\abs{pq}$ of an
edge $\{p,q\}$ is the Euclidean distance between the 
points $p$ and $q$. The length of a shortest path between two 
nodes $p,q\in V$ is denoted by $d(p,q)$.

We formally define a \emph{routing scheme} for $G$. Each 
node $p$ of $G$ is assigned a \emph{label} 
$\ell(p) \in \{0,1\}^*$ that identifies it in the network. 
Furthermore, we store with $p$ a \emph{routing table}
$\rho(p) \in \{0, 1\}^*$. The routing scheme works as follows: 
the packet contains the label $\ell(q)$ of the target node $q$,
and initially it is situated at the start node $p$.
In each step of the routing algorithm, the packet resides at 
a current node $p' \in V$. It may consult the routing table
$\rho(p')$ of $p'$ and the label $\ell(q)$ of the target
to determine the next node $q'$ to which the packet is
forwarded. The node $q'$ must be a neighbor of $p'$ in $G$. 
This is repeated until the packet reaches its destination $q$. 
The scheme is modeled by a 
\emph{routing function} $f: \rho(V) \times \ell(V) \rightarrow V$. 

In the literature, there are varying definitions for 
the notion of a routing scheme~\cite{KaplanMuRoSe18, yan2012compact, 
RodittyTo16}. For example, we may sometimes store 
additional information in the \emph{header} of a data packet 
(it travels with the packet and can store information from past vertices).
Similarly, the 
routing function sometimes allows the use of an 
\emph{intermediate} target label. This is helpful for 
recursive routing schemes. Here, however, we will not need 
any of these additional capabilities.

As mentioned, the routing scheme operates by repeatedly
applying the routing function. More precisely, given a
start node $p \in V$ and a target label $\ell(q)$,
the scheme produces the sequence of nodes
$p_0 = p$ and $p_i = f(\rho(p_{i-1}), \ell(q))$, for $i \geq 1$. 
Naturally, we want routing schemes for which every packet reaches 
its desired destination. More precisely, a routing scheme is 
\emph{correct} if for any $p, q\in V$, there exists a finite 
$k = k(p,q) \geq 0$ such that $p_k = q$ (and $p_{i} \neq q$ for 
$0 \leq i <k $). We call $p_0, p_1, \dots,p_k$ the 
\emph{routing path} between $p$ and $q$. The \emph{routing distance} 
between $p$ and $q$ is defined as 
$d_{\rho}(p,q) = \sum_{i = 1}^k \abs{p_{i-1}p_i}$.

The quality of the routing scheme is measured by several parameters:
\begin{enumerate}
\item  the \emph{label size} 
  $\max_{p \in V} \vert \ell(p)\vert$,
\item the \emph{table size} 
  $\max_{p \in V} \vert\rho(p)\vert$,
\item the \emph{stretch factor}
  $\max_{p\neq q\in V}d_{\rho}(p,q)/d(p,q)$, and
\item the preprocessing time.
\end{enumerate}

Let $P$ be a polygonal domain with $n$ vertices. 
The \emph{boundary} $\partial P$ of $P$ consists of $h$ 
pairwise disjoint simple closed polygonal chains: one 
\emph{outer boundary} and $h-1$ \emph{hole boundaries}, or 
$h$ \emph{hole boundaries} with no outer boundary. All hole
boundaries lie inside the outer boundary, and no hole boundary lies
inside another hole boundary. In both cases, we say that $P$ has 
$h$ holes. The interior induced by any hole boundary and the exterior 
of the outer boundary are not contained in $P$. We denote 
the (open) \emph{interior} of $P$ by $\innt P$, i.e., 
$\innt P = P\setminus\partial P$. We assume that $P$ is in 
general position: no three vertices of $P$ lie on a common line, and
for each pair of vertices in $P$, the shortest path between them 
is unique. Let $n_i$, $0 \leq i \leq h-1$, be 
the number of vertices on the $i$-th boundary of $P$.
For each boundary $i$, we number the vertices from
$0$ to $n_i - 1$, in clockwise order if $i$ is a hole boundary, or 
in counterclockwise order if $i$ is the outer boundary. The $k$th 
vertex of the $i$th boundary is denoted by $p_{i,k}$.

Two points $p$ and $q$ in $P$ can \emph{see each other} in $P$ 
if and only if $\overline{pq} \subset P$. 
By our general position assumption, $\overline{pq}$
touches $\partial P$ only if $\overline{pq}$ is itself an
edge of $P$.
The \emph{visibility 
graph} of $P$, $\VG(P)$, has the same vertices
as $P$ and an edge between two vertices 
if and only if they see each other in $P$.
We show the following main theorem:

\begin{theorem}
\label{mainThm}
Let $P$ be a polygonal domain with $n$ vertices and $h$ holes. 
For any $\eps > 0$, we can construct a routing scheme for 
$\VG(P)$ with labels of $\Oe(\log n)$ bits and routing tables of 
$\Oe((1/\eps + h)\log n)$ bits per vertex. For any two sites 
$p, q\in P$, the scheme produces a routing path with stretch 
factor at most $1 + \eps$. The preprocessing time is 
$\Oe(n^2\log n)$. If $P$ is a simple polygon, the preprocessing
time reduces to $O(n^2)$.
\end{theorem}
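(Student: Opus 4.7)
The plan is to construct implicitly a sparse $(1+\eps)$-spanner of $\VG(P)$ and route on it: at the current vertex $v$, the routing function selects the next edge by locating the target $q$ within a coarse partition of the plane around $v$ that is dictated by the shortest-path structure of $\VG(P)$. The labels $\ell(\cdot)$ would consist of $O(\log n)$ bits encoding each vertex's position, enough for~$v$ to compute the direction toward~$q$.

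For each vertex $v$ of $P$, I would first compute the shortest path tree $T_v$ of $\VG(P)$ rooted at~$v$, and record for every other vertex $u$ the first edge $e_v(u)$ along the shortest $v$-to-$u$ path. Adjacent vertices in angular order around $v$ typically agree on $e_v(\cdot)$, except at ``breakpoints'' where the shortest path must switch its side around a hole or a reflex feature, so the set of targets decomposes into $O(n)$ angular regions around~$v$ on each of which $e_v(\cdot)$ is constant.

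To compress these $O(n)$ regions into a table of size $O((1/\eps+h)\log n)$, I would coarsen by merging consecutive regions whose stored first-edge directions differ by at most~$\eps$: by a standard $\Theta$-graph-style argument, substituting one region's edge for its neighbor's inflates the first hop by a factor of at most $1+O(\eps)$, and the error telescopes over the routing path. The crucial geometric step is to argue that region-boundaries whose angular jump \emph{cannot} be coarsened away correspond to moments when the shortest path must redirect around a different hole, yielding only $O(h)$ such boundaries; the remaining boundaries fit into $O(1/\eps)$ uniform angular cones. Each of the $O(1/\eps+h)$ merged regions contributes one $O(\log n)$-bit entry (a representative edge and its angular range) to~$\rho(v)$, and routing at $v$ is a lookup in $\rho(v)$ for the region containing~$q$ followed by a hop along the stored edge.

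The main obstacle is this hole analysis: bounding by $O(h)$ the non-coarsenable angular jumps in $e_v(\cdot)$, and showing that the greedy hops accumulate only a $1+\eps$ excess overall rather than multiplicatively. I plan to handle it by charging each hop's excess against the angular slack of its cone, in the spirit of $\Theta$-graph stretch analysis, and by invoking the $O(n+h)$ complexity of the shortest-path map of a polygonal domain~\cite{HershbergerSu99}. For preprocessing, computing $T_v$ from every~$v$ costs $O(n^2\log n)$ via the Hershberger--Suri shortest-path algorithm; when $P$ is a simple polygon, the linear-time single-source shortest-path algorithm on the triangulation~\cite{GuibasHeLeShTa87} improves this to~$O(n^2)$.
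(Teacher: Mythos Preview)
Your high-level framework---compute a shortest-path tree from every vertex, use Yao/$\Theta$-style cones, and invoke Hershberger--Suri or Guibas~\etal for the preprocessing time---matches the paper. But the routing mechanism you propose is genuinely different from the paper's, and the difference is exactly where your acknowledged ``main obstacle'' lies. You want to label $q$ by its position and have $v$ look up the region by the \emph{angle from $v$ to $q$}. The paper does not do this: it labels $q$ by the pair (hole index $i$, position $k$ along that boundary), and the cones at $v$ are indexed by the direction of the \emph{first edge} $e_v(q)$, not by the direction of $q$ itself. These two angles can be arbitrarily far apart---a target just behind an obstacle can lie at nearly the same bearing from $v$ as a directly visible target, yet require a first edge pointing elsewhere entirely. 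The paper's key structural fact (Lemma~\ref{lem:intervals}) is that for each fixed cone $C_j$ and each boundary $\Pi_i$, the set of targets on $\Pi_i$ whose first edge falls in $C_j$ is a single cyclic interval of boundary indices; this is what makes an $O(\log n)$-bit combinatorial label suffice for lookup. The $O(1/\eps + h)$ table size then comes from a separate ``stretched hole'' lemma (Lemma~\ref{lem:stretchedHoles}): at most one boundary can occupy three distinct cones with a gap, so the number of nonempty (cone, hole) pairs is $O(t+h)$.

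Your alternative---sorting targets by their own angle and hoping that large jumps in first-edge direction number only $O(h)$---is not what the paper proves, and I do not see that it holds. Even in a simple polygon one can place a directly visible vertex angularly adjacent to a vertex hidden in a deep pocket, producing a large first-edge jump with no hole to charge it to; iterating this threatens $\Omega(n)$ non-coarsenable boundaries. Separately, your stretch argument needs tightening: the paper does not bound a multiplicative ``inflation of the first hop'' but rather hops to the \emph{closest visible} vertex $s$ in the relevant cone and shows $d(s,q)\le d(v,q)-|vs|/(1+\eps)$ (Lemma~\ref{lem:polyStretch1}), an additive decrease in the potential $d(\cdot,q)$ that telescopes to $d_\rho(p,q)\le(1+\eps)d(p,q)$. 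Storing an arbitrary ``representative edge'' per merged region, as you suggest, does not give this inequality; the closest-in-cone choice is essential.
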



\section{Cones in Polygonal Domains}
\label{sec:cones}

Let $P$ be a polygonal domain with $n$ vertices and $h$ holes. 
Furthermore, let $t \geq 3$ be an integer parameter, to be determined 
later. 
Following Yao~\cite{Yao82} and Clarkson~\cite{Clarkson87}, we 
subdivide the visibility polygon of each vertex in $P$ into 
$t$ \emph{cones} with a small enough apex angle. This 
will allow us to construct compact routing tables that support
a routing algorithm with small stretch factor.

\begin{figure}[htbp]
	\centering
	\includegraphics{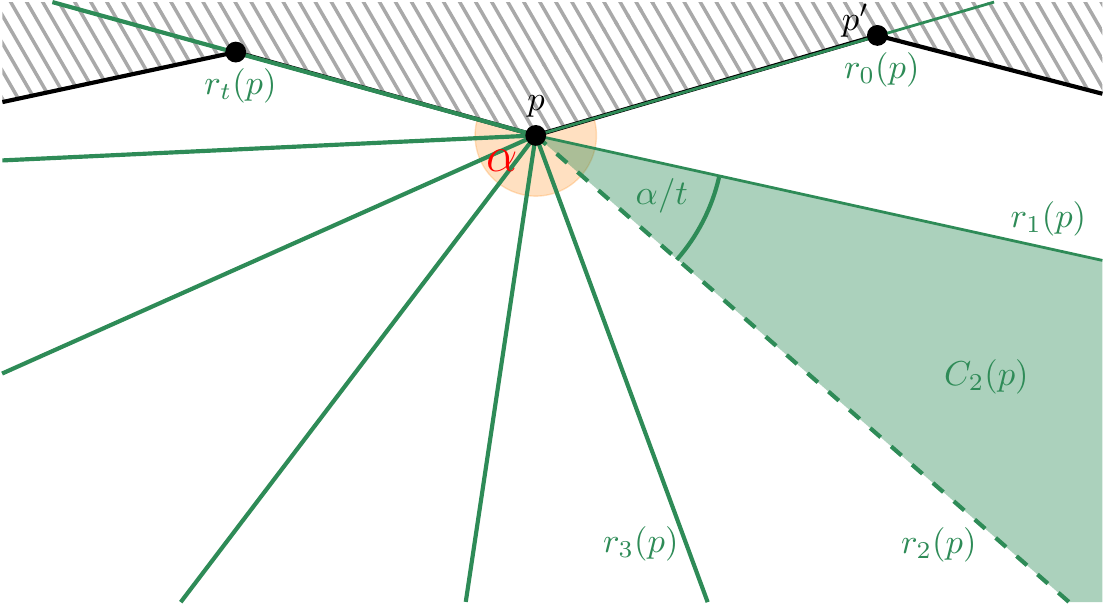}
	\caption{The cones and rays of a vertex $p$ with apex 
      angle $\alpha$.}
	\label{fig:cones}
\end{figure}

Let $p$ be a vertex in $P$ and $p'$ the clockwise 
neighbor of $p$ if $p$ is on the outer boundary, or the 
counterclockwise neighbor of $p$ if $p$ lies on a hole boundary. We
denote with 
$\mathbf{r}(p)$ the \emph{ray} from $p$ through $p'$. To obtain
our cones, we rotate $\mathbf{r}(p)$ by certain angles. Let 
$\alpha$ be the inner angle at $p$. For 
$j = 0, \dots, t$, we write
$r_j(p)$ for the ray $\mathbf{r}(p)$ rotated clockwise by angle 
  $j\cdot\alpha / t$.

Now, for $j = 1, \dots, t$, the cone $C_j(p)$ 
has apex $p$, boundary $r_{j - 1}(p) \cup r_j(p)$, and opening
angle $\alpha/t$; see Figure~\ref{fig:cones}. For 
technical reasons, we define $r_j(p)$ not to be part of $C_j(p)$, for
$1 \leq j < t$, whereas we consider $r_t(p)$
to be part of $C_t(p)$. 
Furthermore, we write $\C(p) = \{ C_j(p) \mid 1 \leq j \leq t \}$ 
for the set of all cones with apex $p$. 
Since the opening angle of each cone is 
$\alpha/t \leq 2\pi/t$ and since $t \geq 3$,  each cone is convex.

The following proof is similar to the one given by 
Clarkson~\cite{Clarkson87} and Narasimhan and 
Smid~\cite{NarasimhanSm07}, though the former shows only that the 
construction leads to an $O(1/\eps)$-spanner instead of showing a 
more precise bound in terms of the number of cones. 

\begin{lemma}
\label{lem:Yao}
Let $p$ be a vertex of $P$ and let $\{p,q\}$ be an edge of 
$\VG(P)$ that lies in the cone $C_j(p)$. Furthermore, let $s$ be 
a vertex of $P$ that lies in $C_j(p)$, is visible from $p$,
and that is closest to $p$. Then, 
$
  d(s,q) \leq \abs{pq} - \left(1 - 2\sin(\pi/t) \right) \abs{ps}.
$
\end{lemma}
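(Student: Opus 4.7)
The plan is to reduce the claim to the classical Euclidean chord bound from the Yao/Theta-graph analysis of Clarkson and of Narasimhan and Smid, combined with a visibility argument that forces the relevant shortcut segment to stay inside $P$.

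First I would introduce the auxiliary point $s'$ on $\overline{pq}$ defined by $\abs{ps'} = \abs{ps}$. Since $\{p, q\}$ is a visibility edge, $\overline{pq} \subset P$, so $\overline{s'q} \subset P$ and $\abs{s'q} = \abs{pq} - \abs{ps}$. The triangle $T = \triangle pss'$ is isoceles with legs of length $\abs{ps}$ and apex angle $\theta = \angle spq$ at $p$. Since $s$ and $q$ both lie in $C_j(p)$, whose opening is $\alpha/t \le 2\pi/t$ (the inner angle $\alpha$ at $p$ is at most $2\pi$), we have $\theta \le 2\pi/t$, and hence
\[
\abs{ss'} \;=\; 2\abs{ps}\sin(\theta/2) \;\le\; 2\sin(\pi/t)\,\abs{ps}.
\]

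The heart of the proof is to show that the chord $\overline{ss'}$ itself lies in $P$, so that $\overline{ss'} \cup \overline{s'q}$ is a valid $s$-to-$q$ path in $P$. I would argue by contradiction. Because $T$ is convex and $p \in T$, every segment from $p$ to a point of $T$ is contained in $T$, so any obstacle that prevents $p$ from seeing some point of $T$ must lie entirely inside $T$. If $\overline{ss'} \not\subset P$, then such an obstacle exists, and taking the first visibility window of $p$ that blocks a piece of $\overline{ss'}$ produces a reflex vertex $u$ of $\partial P$ that (i) is visible from $p$, (ii) satisfies $u \in T \subset C_j(p)$ by convexity of the cone, and (iii) is at Euclidean distance from $p$ strictly smaller than $\abs{ps}$, since on the closed triangle $T$ the distance from $p$ attains its maximum $\abs{ps}$ only at the vertices $s$ and $s'$, and $s'$ is not a vertex of $P$ (by general position, no vertex of $P$ other than $p$ and $q$ lies on $\overline{pq}$). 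This contradicts the defining property of $s$ as the vertex of $P$ closest to $p$ among those visible inside $C_j(p)$, so $\overline{ss'} \subset P$.

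With the chord in place, the polyline $\overline{ss'} \cup \overline{s'q}$ is a path in $P$ from $s$ to $q$ of total length $\abs{ss'} + \abs{s'q}$. Since $\VG(P)$ contains the geodesic shortest path between any two vertices of $P$, this upper-bounds $d(s, q)$, giving
\[
d(s,q) \;\le\; \abs{ss'} + \abs{s'q} \;\le\; 2\sin(\pi/t)\abs{ps} + \abs{pq} - \abs{ps} \;=\; \abs{pq} - \bigl(1 - 2\sin(\pi/t)\bigr)\abs{ps}.
\]
The chord computation and the concluding inequality are routine; the main obstacle is ruling out any hole-induced obstruction of $\overline{ss'}$, which the visibility-polygon argument above handles by exploiting the minimality of $\abs{ps}$ among visible vertices in $C_j(p)$ in an essential way.
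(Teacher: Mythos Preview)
Your proof is correct and follows essentially the same route as the paper: introduce the point $s'$ on $\overline{pq}$ with $\abs{ps'}=\abs{ps}$, bound $\abs{ss'}\le 2\sin(\pi/t)\abs{ps}$ via the isoceles triangle, argue that the triangle $\triangle pss'$ is obstacle-free because otherwise a visible vertex of $P$ closer to $p$ than $s$ would sit inside it, and then combine $\abs{ss'}+\abs{s'q}$. The paper states the emptiness of $\triangle pss'$ in one sentence, whereas you spell out the visibility-window contradiction; your phrase that the obstacle ``must lie entirely inside $T$'' is a slight overstatement (only the blocking portion need intrude into $T$, entering and exiting through $\overline{ss'}$), but the conclusion---a vertex of $P$ in the interior of $T$, visible from $p$ and strictly closer than $s$---is exactly what both arguments need and obtain.
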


\begin{figure}[htbp]
	\centering
	\includegraphics{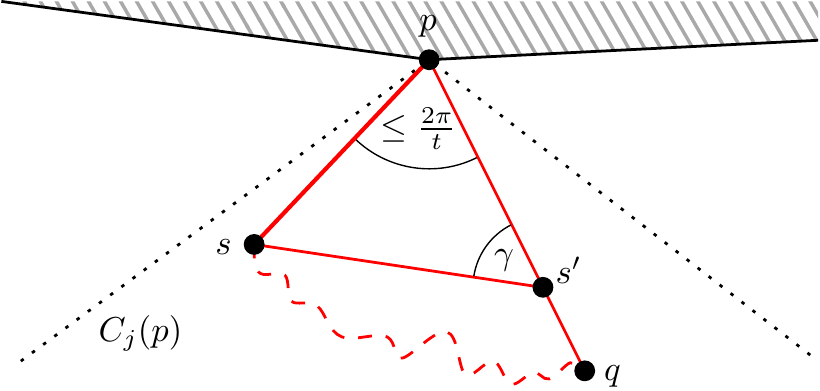}
	\caption{Illustration of Lemma~\ref{lem:Yao}. The points $s$ 
      and $s'$ have the same distance to $p$. The dashed line 
      represents the shortest path from $s$ to $q$.}
	\label{fig:Yao}
\end{figure}

\begin{proof}
Let $s'$ be the point on the line segment $\overline{pq}$ with
$\abs{ps'} = \abs{ps}$; see Figure~\ref{fig:Yao}. Since $p$ can see 
$q$, we have that $p$ can see $s'$ and $s'$ can see $q$. 
Furthermore, $s$ can see $s'$, because $p$ can see $s$ and $s'$
and we chose $s$ to be closest to $p$, so the triangle
$\Delta(p,s,s')$ cannot contain any vertices or (parts of) edges of $P$ in its
interior.  Now, the triangle inequality yields 
$d(s,q) \leq \abs{ss'} + \abs{s'q}$. Let $\beta$ be the inner 
angle at $p$ between the line segments $\overline{ps}$ and 
$\overline{ps'}$. Since both segments lie in the cone $C_j(p)$,
we get $\beta \leq 2\pi / t$. Thus, the angle between
$\overline{s'p}$ and $\overline{s's}$ is $\gamma=\pi/2-\beta/2$. 
Using the sine law and $\sin 2x=2\sin x\cos x$, we get
\[
  \abs{ss'} = \abs{ps} \cdot \frac{\sin\beta}{\sin\gamma}
  = \abs{ps} \cdot \frac{\sin\beta}
     {\sin \left((\pi/2) - (\beta/2)\right)}
  = \abs{ps} \cdot \frac{2\sin (\beta/2)\cos (\beta/2)}
  {\cos (\beta/2)}\leq2\abs{ps}\sin(\pi/t).
\]
Furthermore, we have $\abs{s'q} = \abs{pq} - \abs{ps'}
= \abs{pq} - \abs{ps}$. Thus, the triangle inequality gives
\[
  d(s,q) \leq 2\abs{ps}\sin(\pi/t)+ \abs{pq} - \abs{ps}
  = \abs{pq} - \left(1 - 2\sin(\pi/t)\right)\abs{ps},
\]
as claimed.
\end{proof}

\section{The Routing Scheme}
\label{sec:routinschemePolygons}

Let $\eps > 0$, and let $P$ be a polygonal domain with 
$n$ vertices and $h$ holes. We describe a routing scheme 
for $\VG(P)$ with stretch factor $1 + \eps$. The idea is
to compute for each vertex $p$ the corresponding set of 
cones $\C(p)$ and to store a certain interval of indices for 
each cone $C_j(p)$ in the routing table of $p$.  If an interval 
of a cone $C_j(p)$ contains the target vertex $t$, we proceed to 
the nearest neighbor of $p$ in $C_j(p)$; see 
Figure~\ref{fig:ideaPolygons}. We will see that this results 
in a routing path with small stretch factor.

\begin{figure}[htbp]
	\centering
	\includegraphics{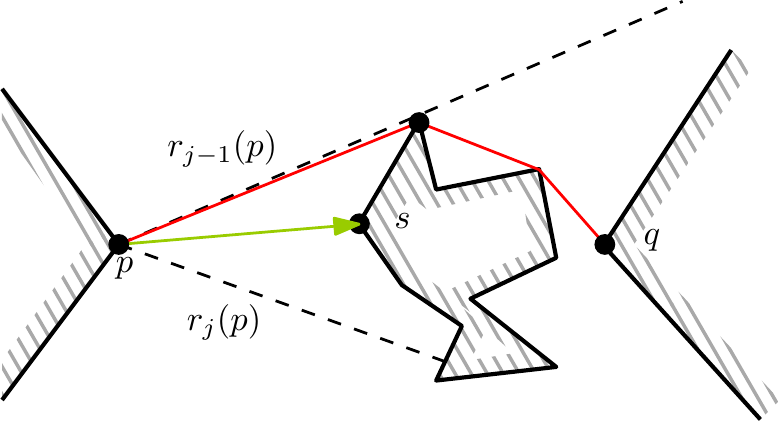}
	\caption{The idea of the routing scheme. The first edge on a 
      shortest path from $p$ to $q$ (red) is contained in $C_j(p)$. 
      The routing algorithm will route the packet from $p$ to 
      $s$ (green), the closest vertex to $p$ in $C_j$.}
	\label{fig:ideaPolygons}
\end{figure}

In the preprocessing phase, we first compute the label of each 
vertex $p_{i, k}$. The label of $p_{i, k}$ is the binary 
representation of $i$, concatenated with the binary representation 
of $k$. Thus, all labels are 
distinct binary strings of length 
$\lceil \log h \rceil + \lceil \log n\rceil$.

Let $p$ be a vertex in $P$. Throughout this section, we 
will write $\C$ and $C_j$ instead of $\C(p)$ and $C_j(p)$. 
The routing table of $p$ is constructed as follows: first, 
we compute a shortest path tree $T$ for $p$. For a vertex
$s$ of $P$, let $T_s$ be the subtree of $T$ with root $s$, 
and denote the set of all vertices on the $i$-th hole in 
$T_s$ by $I_s(i)$. The following well-known observation lies at the 
heart of our routing scheme. 
For completeness, we include a proof.

\begin{observation}
\label{obs:noCrossing}
Let $q_1$ and $q_2$ be two vertices of $P$.
Let $\pi_1$ be the shortest path in $T$ from $p$ to 
$q_1$, and $\pi_2$ the shortest path in $T$ from 
$p$ to $q_2$. Let $l$ be the lowest common ancestor of
$q_1$ and $q_2$ in $T$. Then, $\pi_1$ and $\pi_2$ do not 
cross or touch in a point $x$ with 
$d(p, x) > d(p, l)$.
\end{observation}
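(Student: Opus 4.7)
The plan is a proof by contradiction using the standard path-swap argument. Assume that $\pi_1$ and $\pi_2$ meet at a point $x$ with $d(p,x) > d(p,l)$. Let $a$ and $b$ denote the lengths of the prefixes $\pi_1[p,x]$ and $\pi_2[p,x]$, respectively. The first key observation is that each prefix $\pi_i[p,x]$ must itself be a shortest path from $p$ to $x$; otherwise we could replace it by a strictly shorter curve in $P$ and obtain a shorter path from $p$ to $q_i$, contradicting that $\pi_i$ lies in the shortest-path tree $T$. Hence $a=b$.

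I would then swap the tails. The concatenation $\sigma = \pi_1[p,x]\circ\pi_2[x,q_2]$ is a path from $p$ to $q_2$ of length $a + (d(p,q_2)-b) = d(p,q_2)$, so it is a shortest such path. Since $p$ and $q_2$ are vertices of $P$, the general-position assumption guarantees that the shortest path between them is unique, so $\sigma = \pi_2$, which forces $\pi_1[p,x] = \pi_2[p,x]$. Thus the two paths share their entire prefix all the way up to $x$.

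It then remains to extract a common vertex of $T$ on both paths whose depth strictly exceeds $d(p,l)$; this will contradict the LCA property of $l$. If $x$ is already a vertex of $P$, we simply take $v=x$. Otherwise, $x$ lies strictly inside a single edge $\overline{uv}$ of $\VG(P)$ that is used by both $\pi_1$ and $\pi_2$; because shortest paths in a polygonal domain bend only at vertices of $P$, both paths must leave this edge at the same endpoint $v$, which is then a common vertex of $\pi_1$ and $\pi_2$ with $d(p,v) > d(p,x) > d(p,l)$. The only delicate point is this last step: one has to rule out that either path could turn at the non-vertex point $x$, which the polygonal-chain structure of shortest paths prevents.
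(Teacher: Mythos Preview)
Your argument is correct and takes a genuinely different route from the paper's. The paper splits into two cases. If the paths merely \emph{touch} at $x$, then (since tree edges are straight segments and no three vertices are collinear) $x$ must be a vertex, and two distinct $T$-paths from $l$ to $x$ would give a cycle in $T$. If the paths \emph{cross} at $x$, the paper places a small disk $D\subset P$ around $x$ and uses the triangle inequality inside $D$ to produce a strict shortcut for $\pi_2$, contradicting its optimality directly. In particular, the paper never invokes the uniqueness of shortest paths; the cross case yields a strictly shorter curve, not an alternative shortest one.

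Your swap argument treats both cases uniformly: you first show the two prefixes have equal length (subpaths of geodesics are geodesics), then splice to get a second shortest $p$--$q_2$ path and appeal to the general-position uniqueness assumption to force $\pi_1[p,x]=\pi_2[p,x]$, and finally extract a common tree vertex beyond $l$. This is clean and conceptually pleasant; its cost is that it leans on the uniqueness hypothesis, whereas the paper's shortcut-in-a-disk argument does not. Your final step---pushing from $x$ to the far endpoint $v$ of the containing tree edge---is fine, but note it quietly uses general position a second time: two different children $v_1\neq v_2$ of $u$ whose edges both pass through $x$ would make $u,v_1,v_2$ collinear.
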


\begin{proof}
Suppose first that $\pi_1$ touches $\pi_2$ in a point $x$ with 
$d(p, x) > d(p, l)$. The edges of $T$ are line segments, 
so this can only happen if $x$ is a vertex. But then $T$ would 
contain a cycle, which is impossible.

Next, suppose that $\pi_1$ and $\pi_2$ cross in a point $x$
with $d(p, x) > d(p, l)$. Suppose further that $x$ lies
on the edge $e_1 = (s_1, t_1)$ of $\pi_1$ and the edge 
$e_2 = (s_2, t_2)$ of $\pi_2$; see Figure~\ref{fig:TreesDontCross1}. 
Without loss of generality, we have 
$d(l, s_1) + \abs{s_1 x} \leq d(l, s_2) + \abs{s_2 x}$. 
Since $x \in \innt P$, there is a $\delta > 0$ such that the 
disk $D$ with center $x$ and radius $\delta$ is contained in $P$. 
Now consider the intersection $y_1$ of $\partial D$ with 
$\overline{s_1x}$ and the intersection $y_2$ of $\partial D$ with 
$\overline{xt_2}$.  We have $\overline{y_1y_2} \subset D \subset P$,
and the triangle inequality yields 
$\abs{y_1 x} + \abs{x y_2} > \abs{y_1y_2}$. Hence, the path 
$s_1y_1y_2t_2$ is a shortcut from $l$ to $t_2$, a 
contradiction to $\pi_2$ being a shortest path.
\end{proof}
\begin{figure}[htbp]
	\centering
	\includegraphics[page=2]{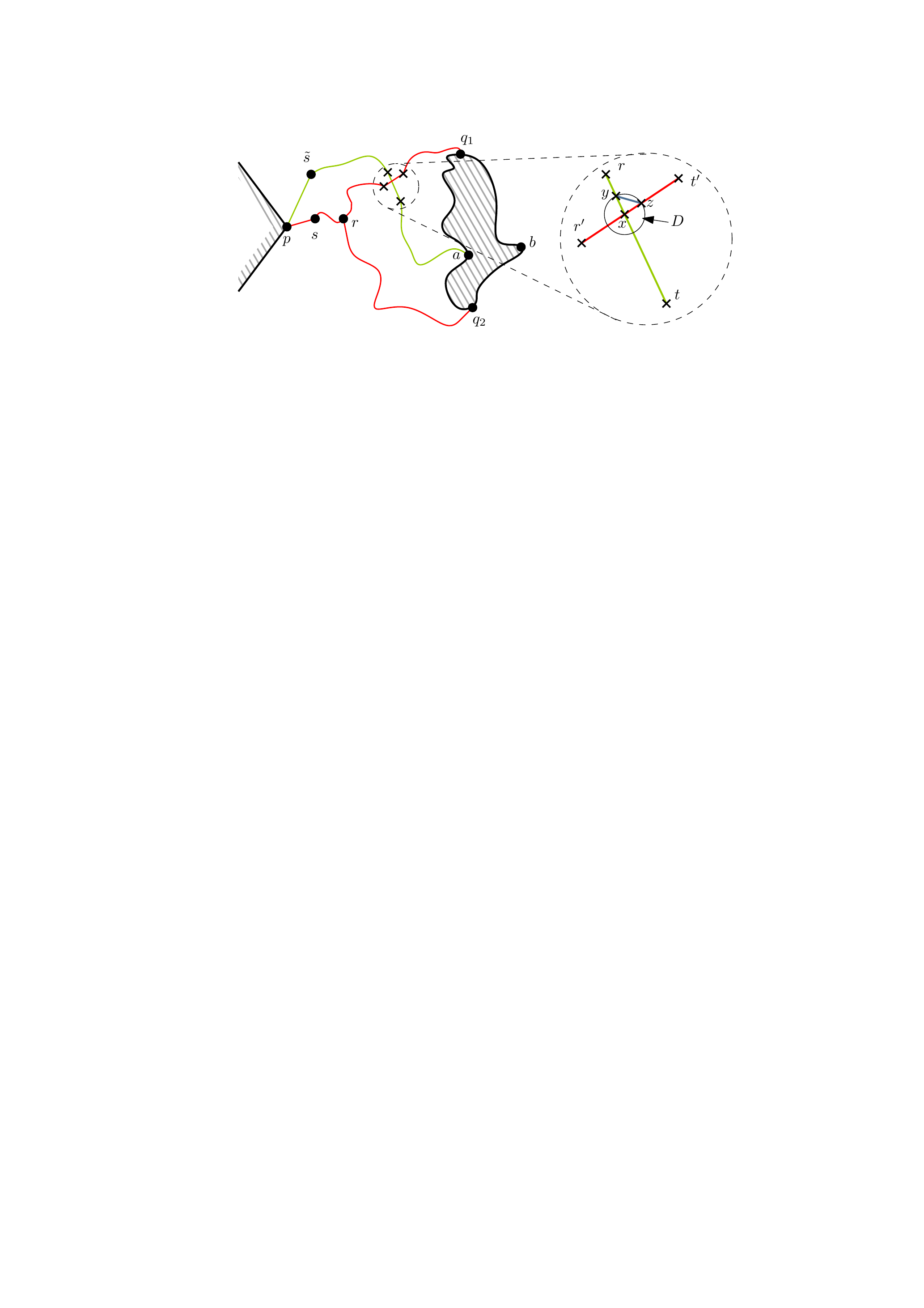}
	\caption{Two shortest paths that originate in $p$ 
	cannot cross.}
	\label{fig:TreesDontCross1}
\end{figure}

\begin{figure}[htbp]
	\centering
	\includegraphics[page=3]{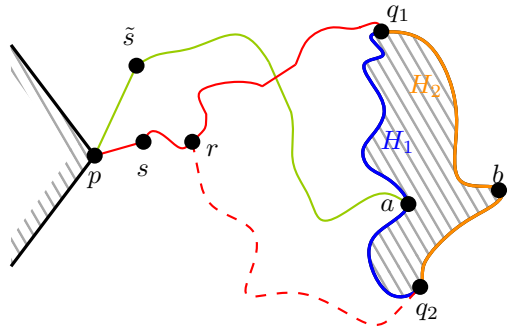}
	\caption{The shortest path from $p$ to $a$ (green) crosses the 
	shortest path from $p$ to $q_1$ (red). This gives a contradiction 
	by Observation~\ref{obs:noCrossing}.}
	\label{fig:TreesDontCross}
\end{figure}

\begin{lemma}
\label{lem:intervals}
Let $e = (p, s)$ be an edge in $T$. Then, the indices of the 
vertices in $I_{s}(i)$ form an interval. Furthermore, let 
$f = (p, s')$ be another edge in $T$, such that $e$ and $f$ are 
consecutive edges in $T$ around $p$.\footnote{By this,
we mean that
there is no other edge of $T$ incident to $p$ in the cone
that is spanned by $e$ and $f$ and that extends into the interior of $P$.}
Then, the 
indices of the vertices in $I_{s}(i) \cup I_{s'}(i)$ are again 
an interval.
\end{lemma}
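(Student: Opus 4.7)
The plan is to derive both parts by contradiction from Observation~\ref{obs:noCrossing}, using the Jordan-curve construction illustrated in Figure~\ref{fig:TreesDontCross}. Throughout I read ``interval'' cyclically on the hole indices $0, \ldots, n_i - 1$, so both claims assert that the relevant index set is a contiguous arc of hole $i$'s boundary. The key idea is that the non-crossing property of $T$ (Observation~\ref{obs:noCrossing}) forces the tree paths from $p$ to the vertices of hole $i$ to arrive on the hole in the same cyclic order as their first edges out of $p$; the two claims are the natural manifestations of this principle.

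For the first statement, suppose toward contradiction that $I_s(i)$ is not a cyclic interval. Then there exist $q_1, q_2 \in I_s(i)$ and a hole $i$ vertex $a \notin I_s(i)$ strictly between $q_1, q_2$ on one arc, while some vertex outside $I_s(i)$ sits on the other arc (so neither arc is entirely in $I_s(i)$). Let $\ell$ be the lowest common ancestor of $q_1$ and $q_2$ in $T$; the tree paths $\pi_1, \pi_2$ from $p$ share the prefix $p \to s \to \cdots \to \ell$ and diverge there. By Observation~\ref{obs:noCrossing}, the sub-paths from $\ell$ to $q_1$ and from $\ell$ to $q_2$ do not cross, so concatenating them with the hole arc from $q_1$ to $q_2$ through $a$ yields a simple closed curve $\Gamma$. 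The path $\pi_a$ starts at $p$ with a first edge $(p, s') \neq (p, s)$, which by an orientation argument at $p$ places its initial direction on the side of $\Gamma$ opposite to where $\pi_a$ must arrive at $a$, namely the $P$-side of the hole arc. Since $\pi_a$ cannot cross $\pi_1$ or $\pi_2$ (Observation~\ref{obs:noCrossing}) and cannot traverse the hole interior (not in $P$), it cannot reach $a$, a contradiction.

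For the second statement, apply the first to get that $I_s(i)$ and $I_{s'}(i)$ are each cyclic intervals. If their union were not a cyclic interval, there would be a vertex $a$ cyclically between them with $a \in I_{s''}(i)$ for some child $s''$ of $p$ with $s'' \neq s, s'$. Picking $q_1 \in I_s(i)$ and $q_2 \in I_{s'}(i)$ bordering the gap containing $a$, the same construction applies, now with $\ell = p$ since $\pi_1$ and $\pi_2$ already diverge at $p$. The hypothesis that $e = (p, s)$ and $f = (p, s')$ are consecutive edges of $T$ around $p$ means $(p, s'')$, and hence the initial direction of $\pi_a$, lies in the cone complementary to the one between $(p, s)$ and $(p, s')$, which is the wrong side of $\Gamma$; the identical crossing argument produces the contradiction.

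The main obstacle I expect is the orientation bookkeeping for $\Gamma$: pinning down which of its two regions contains $p$, which contains the $P$-side approach to $a$, and why the first edge of $\pi_a$ is forced to be on the opposite side. I would handle this by a local analysis in a small disk at $p$ (using the cyclic order of tree edges around $p$, and for the second claim the consecutiveness of $e$ and $f$) together with a local analysis at $a$ (using that the hole lies on one specific side of the arc), paying attention to degenerate configurations such as $\ell = s$, $p$ lying on hole $i$ itself, or $q_1$ and $q_2$ being adjacent on the hole.
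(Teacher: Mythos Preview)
Your approach is the paper's in spirit: argue both parts by contradiction from Observation~\ref{obs:noCrossing} via a planar separation built from the two tree paths out of the lowest common ancestor of $q_1$ and $q_2$. The difference is in how you decide which side of the separating set traps the witness. You close a single Jordan curve $\Gamma$ with the hole arc through $a$ and plan a local orientation analysis at $p$ and at $a$. For the second part this is fine, since there $\ell = p$ lies on $\Gamma$ and the consecutiveness of $e$ and $f$ fixes the local picture at $p$. In the first part, however, $p$ is \emph{not} on $\Gamma$ (it is joined to $\ell \in \Gamma$ only by the shared prefix through $s$), so a ``local analysis in a small disk at $p$'' cannot by itself tell you which Jordan region $p$ falls into; that is a global question about $\Gamma$, and nothing you have set up so far forces $p$ to be on the side opposite the $P$-approach to $a$. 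This is a genuine gap, not just bookkeeping.

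The paper closes exactly this gap by keeping a witness on \emph{each} arc: it takes $a \in H_1$ and $b \in H_2$, both outside $I_s(i)$ (respectively $I_s(i)\cup I_{s'}(i)$), and works with the open components of $P \setminus (\tilde\pi_1 \cup \tilde\pi_2)$ rather than closing a Jordan curve. Then $a$ and $b$ lie on the boundaries of \emph{distinct} components $D_a$ and $D_b$, while $p$ (respectively, any further child of $p$, using the consecutiveness of $s,s'$) can lie on the boundary of at most one of them. Pigeonhole hands you a witness whose shortest path from $p$ is forced to enter a component that $p$ does not border, hence to cross $\tilde\pi_1 \cup \tilde\pi_2$, contradicting Observation~\ref{obs:noCrossing}. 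You already note the existence of a second excluded vertex on the other arc; actually using it, as the paper does, is precisely what dissolves the orientation difficulty you flagged.
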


\begin{proof}
For the first part of the lemma, suppose that the indices 
for $I_{s}(i)$ do not form an interval. Then, there are two 
vertices $q_1, q_2 \in I_{s}(i)$ such that if we consider the two 
polygonal chains $H_1$ and $H_2$ with endpoints
$q_1$ and $q_2$ that constitute the boundary of hole $i$, 
there are two vertices $a, b \notin I_s(i)$ 
with $a \in H_1$ 
and $b \in H_2$ (see Figure~\ref{fig:TreesDontCross}). 
Let $\pi_1$ and $\pi_2$ be the shortest paths in $T$
from $s$ to $q_1$ and from $s$ to $q_2$. 
Let $r$ be the last common
vertex of $\pi_1$ and $\pi_2$, and let $\tilde \pi_1$ be
the subpath of $\pi_1$ from $r$ to $q_1$ and $\tilde \pi_2$
the subpath of $\pi_2$ from $r$ to $q_2$. 
Consider the set $\mathcal{D}$ of 
(open) connected components of 
$P \setminus (\tilde \pi_1 \cup \tilde \pi_2)$. 
Any vertex of $P$ that is on the boundary of
two different components of $\mathcal{D}$ must lie on 
$\tilde \pi_1 \cup \tilde \pi _2$. 
Hence, $p$, $a$, and, $b$ each lie on the boundary of
exactly one component in $\mathcal{D}$, and the components
$D_a$ and $D_b$ with $a$ and $b$ on the boundary are distinct.
Suppose without loss of generality that $p \not \in\partial D_a$. 
Then, there has to be a child $\tilde{s}$ of 
$p$ in $T$ such that
$a \in I_{\tilde{s}}(i)$ and such that the shortest path from 
$\tilde{s}$ to $a$ crosses $\pi_1 \cup \pi_2$. 
Since $p$ is the lowest common ancestor of $a$ and $q_1$ and 
of $a$ and $q_2$, this contradicts Observation~\ref{obs:noCrossing}.

The proof for the second part is very similar.
We assume for the sake of contradiction that the indices
in $I_{s}(i) \cup I_{s'}(i)$ do not form an interval, and we find 
vertices $q_1, q_2 \in I_{s}(i) \cup I_{s'}(i)$ such that if we 
split the boundary of hole $i$ into two chains $H_1$ and $H_2$ 
between $q_1$ and $q_2$, there are two vertices
$a, b \notin I_s(i) \cup I_{s'}(i)$ with $a \in H_1$ 
and $b \in H_2$. Furthermore, we may assume that 
$a \neq p$ and $b \neq p$, because otherwise $q_1$ and $q_2$ would be the
two vertices of $P$ that share an edge with $p$, and thus $q_1$ and $q_2$
would be the only two children of $p$ in $T$ and $I_s(i) \cup I_{s'}(i)$
would be an interval.
Let $\pi_1$ be the shortest path 
in $T$ from $s$ to $q_1$ and $\pi_2$ the shortest path in $T$ from 
$s'$ to $q_2$, and consider the lowest common ancestor $r$ of 
$q_1$ and $q_2$ in $T$ (now $r$ might be $p$). 
Let $\tilde \pi_1$ be
the subpath of $\pi_1$ from $r$ to $q_1$ and $\tilde \pi_2$
the subpath of $\pi_2$ from $r$ to $q_2$. 
Consider the set $\mathcal{D}$ of 
(open) connected components of 
$P \setminus (\tilde \pi_1 \cup \tilde \pi_2)$.
As before, any vertex that lies on the boundaries of
two distinct components of $\mathcal{D}$ must belong to
$\tilde \pi_1 \cup \tilde \pi_2$, so 
$a$ and $b$ are on the boundaries of two 
uniquely defined distinct components in $\mathcal{D}$. We call these
components $D_a$ and $D_b$.
Now, $s$ and $s'$ are consecutive around $p$, so at least one
of $D_a$ and $D_b$ contains no other child of $p$
in $T$ on its boundary. Let it be $D_a$.
Then, the shortest path from $p$ to $a$ must
cross $\pi_1 \cup \pi_2$, contradicting
Observation~\ref{obs:noCrossing}.
\end{proof}

Lemma~\ref{lem:intervals} indicates how to construct the routing 
table $\rho(p)$ for $p$. We set
\begin{equation}
\label{equ:defT}
  t = \left\lceil\pi / \arcsin \left(\frac{1}{2\left(1 + 1/\eps\right)}\right)\right\rceil,
\end{equation}
and we construct a set $\C$ of cones for $p$ as in 
Section~\ref{sec:cones}.  Let $C_j \in \C$ be a cone, and let 
$\Pi_i$ be a hole boundary or the outer boundary. We define 
$C_j \sqcap \Pi_i$ as the set of all vertices $q$ on $\Pi_i$ for
which the first edge of the shortest path from $p$ to
$q$ lies in $C_j$. By Lemma~\ref{lem:intervals}, the indices
of the vertices in $C_j \sqcap \Pi_i$ form a (possibly empty)
cyclic interval $[k_1, k_2]$. If 
$C_j \sqcap \Pi_i = \emptyset$, we do nothing.
Otherwise, if $C_j \sqcap \Pi_i \neq \emptyset$, there is a vertex 
$r \in C_j$ closest to $p$, and we add the entry
$(i, k_1, k_2,  \ell(r))$ to $\rho(p)$. This entry 
needs $2\cdot\lceil \log h \rceil + 3\cdot \lceil \log n \rceil$ 
bits.

Now, the routing function $f : \rho(V) \times \ell(V) \rightarrow V$ is 
quite 
simple. Given the routing table $\rho(p)$ for the 
current vertex $p$ and a target label
$\ell(q) = (i, k)$, indicating vertex $k$ on hole $i$, 
we search $\rho(p)$  for
an entry $(i, k_1, k_2, \ell(r))$ with $k \in [k_1, k_2]$. 
By construction, this entry is unique.
We return $r$ as the next destination for the packet
(see Figure~\ref{fig:ideaPolygons}).

\section{Analysis}

We analyze the stretch factor of our routing scheme and 
give upper bounds on the size of the routing tables and 
the preprocessing time. Let $\eps > 0$ be fixed, and 
let $1 + \eps$ be the desired stretch factor.
We set $t$ as in (\ref{equ:defT}). First, we bound $t$
in terms of $\eps$. This immediately gives 
$\vert\C(p)\vert \in \Oe(1/\eps)$, for every vertex $p$.

\begin{lemma}
\label{lem:epsilonBound}
We have $t\leq 2\pi\left(1+1/\eps\right)+1.$
\end{lemma}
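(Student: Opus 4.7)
The goal is to bound the ceiling $t = \lceil \pi / \arcsin(1/(2(1+1/\varepsilon))) \rceil$ by $2\pi(1+1/\varepsilon) + 1$. The plan is to exploit the elementary inequality $\arcsin(x) \geq x$ valid for all $x \in [0,1]$, which amounts to the fact that $\sin$ lies below the identity on $[0, \pi/2]$.

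First I would observe that the argument $x = 1/(2(1+1/\varepsilon))$ lies in $(0, 1/2]$ since $1 + 1/\varepsilon \geq 1$, so $\arcsin(x)$ is well-defined and positive. Applying $\arcsin(x) \geq x$ and inverting (both sides positive) gives
\[
\frac{\pi}{\arcsin\!\left(\frac{1}{2(1+1/\varepsilon)}\right)} \;\leq\; \frac{\pi}{\frac{1}{2(1+1/\varepsilon)}} \;=\; 2\pi(1 + 1/\varepsilon).
\]
Taking the ceiling introduces an additive error of less than $1$, so
\[
t \;=\; \left\lceil \frac{\pi}{\arcsin\!\left(\frac{1}{2(1+1/\varepsilon)}\right)} \right\rceil \;\leq\; 2\pi(1+1/\varepsilon) + 1,
\]
which is the claimed bound.

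There is no real obstacle here; the only thing to be slightly careful about is justifying $\arcsin(x) \geq x$ on $[0,1]$, which follows either from the Taylor expansion $\arcsin(x) = x + x^3/6 + \cdots$ with nonnegative coefficients, or equivalently from the fact that $\sin(y) \leq y$ for $y \in [0, \pi/2]$, so for $x = \sin(y) \in [0,1]$ we have $y = \arcsin(x) \geq \sin(y) = x$.
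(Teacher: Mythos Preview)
Your proof is correct and follows essentially the same approach as the paper: both use the elementary inequality $\arcsin(x)\geq x$ (equivalently $\sin y\leq y$) applied at $x=1/(2(1+1/\eps))$, then invert and multiply by $\pi$, with the ceiling contributing the final $+1$. Your write-up is in fact slightly more careful, since you make the ceiling step explicit.
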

\begin{proof}
For $x \in (0, 1/2]$, we have $\sin x \leq x$, so for 
$z \in [2, \infty)$, we get that $\sin(1 / z) \leq 1/z$. 
Applying $\arcsin(\cdot)$ on both sides, this gives 
$1 / z \leq \arcsin(1 / z) \Leftrightarrow 1/\arcsin(1 / z) \leq z$. 
We set $z = 2(1 + 1/\eps)$ and multiply by $\pi$
to derive the desired inequality.
\end{proof}

\subsection{The Routing Table}\label{sec:table_size}
Let $p$ be a vertex of $P$. We again write 
$\C$ for $\C(p)$ and $C_j$ instead of $C_j(p)$. To bound 
the size of $\rho(p)$, we need some properties of holes 
with respect to cones. For $i = 0, \dots, h - 1$, we 
write $m(i)$ for the number of cones $C_j \in \C$ with 
$\C_j \sqcap \Pi_i \neq\emptyset$. Then, $\rho(p)$ contains
at most
$
\vert\rho(p)\vert \leq \Oe\left(\sum_{i=0}^{h-1}m(i)\log n\right)
$
bits.
We say that $\Pi_i$ is \emph{stretched for the cone $C_j$} if 
there are indices $0 \leq j_1 < j < j_2 < t$  such that 
$C_{j_1} \sqcap \Pi_i$, $C_j \sqcap \Pi_i$ and 
$C_{j_2}\sqcap\Pi_i$ are non-empty. If $\Pi_i$ is not stretched for 
any cone of $p$, then $m(i) \leq 2$. We prove the following lemma:

\begin{lemma}
\label{lem:stretchedHoles}
For every cone $C_j \in \C$, there is at most one 
boundary $\Pi_i$ that is stretched for $C_j$.
\end{lemma}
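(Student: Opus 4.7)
The plan is to argue by contradiction: suppose that two distinct boundaries $\Pi_i$ and $\Pi_{i'}$ of $P$ are both stretched for the same cone $C_j$. I aim to exhibit two shortest paths that cross in a way forbidden by Observation~\ref{obs:noCrossing}.

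First I would extract explicit witnesses from the stretched hypothesis. For $\Pi_i$, choose cone indices $j_1 < j < j_2$ to be the extremal (smallest and largest) indices with $C_k \sqcap \Pi_i \neq \emptyset$, and pick vertices $a^-, a^0, a^+ \in \Pi_i$ whose shortest-path first edges from $p$ lie in $C_{j_1}, C_j, C_{j_2}$, respectively; write $\pi_{a^-}, \pi_{a^0}, \pi_{a^+}$ for the corresponding paths in the shortest-path tree $T$. Do the same for $\Pi_{i'}$ to obtain witnesses $b^-, b^0, b^+$, paths $\pi_{b^-}, \pi_{b^0}, \pi_{b^+}$, and extremal cone indices $j_1' < j < j_2'$.

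Next I would use Observation~\ref{obs:noCrossing} to turn these paths into bounded planar regions. Since the first edges of all six paths lie in distinct children of $p$ in $T$, any two of them meet only at their LCA, which must be $p$. Hence $\pi_{a^-}$ and $\pi_{a^+}$, together with the sub-chain $H_i$ of $\Pi_i$ from $a^-$ to $a^+$ passing through $a^0$, form a simple closed Jordan curve bounding a simply connected region $R_i \subset P$; the path $\pi_{a^0}$ lies inside $R_i$ because its first edge at $p$ sits angularly between those of $\pi_{a^-}$ and $\pi_{a^+}$. A symmetric construction yields a region $R_{i'}$ bounded by $\pi_{b^-}, \pi_{b^+}$, and a sub-chain $H_{i'}$ of $\Pi_{i'}$, with $\pi_{b^0}$ inside.

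The main obstacle, and the heart of the proof, is extracting the contradiction from the coexistence of $R_i$ and $R_{i'}$. The wedge subtended by each region at $p$ strictly contains $C_j$, so the two wedges overlap in an angular sector around $C_j$; and since $\Pi_i \neq \Pi_{i'}$ are disjoint simple closed curves in $\partial P$, the curve $\Pi_{i'}$ cannot cross $\partial R_i$ and so lies either entirely inside $R_i$ or entirely outside. Swapping roles if necessary, assume $\Pi_{i'} \subset R_i$; then the three paths $\pi_{b^\bullet}$ must enter $R_i$ at $p$, forcing $[j_1', j_2'] \subseteq [j_1, j_2]$. The extremality of $j_1, j_2$ then bites: the other chain $\Pi_i \setminus H_i$ lies outside $R_i$, yet every shortest path from $p$ to a vertex on that chain starts with a first edge in a cone of $[j_1, j_2]$ (by the choice of $j_1, j_2$), hence enters $R_i$ and must exit through $\pi_{a^-}$ or $\pi_{a^+}$. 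Picking such an exit path whose first child at $p$ is distinct from those of $\pi_{a^\pm}$ --- which can be arranged using the interval structure of Lemma~\ref{lem:intervals} together with the extremality of $j_1, j_2$ --- produces two shortest paths with LCA $p$ that nevertheless cross, contradicting Observation~\ref{obs:noCrossing}. The most delicate part of the proof is this last step: selecting the right exit path and handling the degenerate cases in which several witnesses happen to share a child of $p$, so that the crossing produced is honest and not absorbed into a shared prefix of the tree.
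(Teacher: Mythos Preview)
Your setup is close to the paper's, but the final paragraph contains a genuine gap. After establishing $\Pi_{i'}\subset R_i$ and $[j_1',j_2']\subseteq[j_1,j_2]$, you try to derive the contradiction from the chain $\Pi_i\setminus H_i$ and the paths $\pi_{a^-},\pi_{a^+}$. But observe that this argument no longer mentions $\Pi_{i'}$ at all: if it worked, it would show that \emph{no} hole can be stretched, which is false. Concretely, the step ``first edge in a cone of $[j_1,j_2]$, hence enters $R_i$'' is unjustified. The wedge of $R_i$ at $p$ is bounded by the actual first edges of $\pi_{a^-}$ and $\pi_{a^+}$, not by the cone boundaries $r_{j_1-1}(p)$ and $r_{j_2}(p)$; a vertex on $\Pi_i\setminus H_i$ whose first edge lies in $C_{j_1}$ but angularly outside $\pi_{a^-}$ (or in $C_{j_2}$ outside $\pi_{a^+}$) never enters $R_i$ and produces no crossing. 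The appeal to Lemma~\ref{lem:intervals} does not rescue this, because the interval structure is compatible with all of $\Pi_i\setminus H_i$ routing through the same children as $a^-$ and $a^+$.

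The missing idea is to use the \emph{middle} witness in cone $C_j$, as the paper does. The paper takes three paths for $\Pi_i$, to vertices $q,r,s$ in cones $C_{j_1},C_j,C_{j_2}$, and lets them cut $P$ into three regions $Q,R,S$. The second hole $\Pi$ lies in exactly one of them, say $Q$ (the region on the $j_1$-side of the path to $r$). Now the stretchedness of $\Pi$ is used: there is a vertex $t\in C_{j'}\sqcap\Pi$ with $j'>j$, so the first edge of the shortest path to $t$ lies in $R\cup S$, and this path must cross one of the bounding shortest paths of $Q$ to reach $t\in Q$---the desired contradiction via Observation~\ref{obs:noCrossing}. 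In your framework the fix is the same: after placing $\Pi_{i'}$ inside $R_i$, split $R_i$ further with $\pi_{a^0}$ and use the witness $b^-$ or $b^+$ of $\Pi_{i'}$ that lies on the opposite side of cone~$C_j$ from the sub-region containing $\Pi_{i'}$.
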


\begin{proof}
Let $\Pi_i$ be a hole boundary that is stretched for $C_j$. 
There are indices $j_1 <j <j_2$ and vertices 
$q \in C_{j_1} \sqcap \Pi_i$, $r \in C_j \sqcap \Pi_i$, 
and $s \in C_{j_2} \sqcap \Pi_i$. We subdivide $P$ into
three regions $Q$, $R$ and $S$: the boundary of $Q$ is 
given by the shortest path from $p$ to $r$, the shortest
path from $p$ to $q$, and the part of 
$\Pi_i$ from $r$ to $q$ not containing $s$. Similarly,
the region  $R$ is bounded by the shortest path from
 $p$ to $r$, the shortest path from $p$ to $s$ and the
 part of $\Pi_i$ between $r$ and $s$ 
 that does not contain $q$. Finally, $S$ is the 
 closure of $P \setminus (Q \cup R)$. 
The interiors of $Q$, $R$, and $S$ are pairwise 
disjoint;  see Figure~\ref{fig:stretchedHoles}.

\begin{figure}[htbp]
	\centering
	\includegraphics{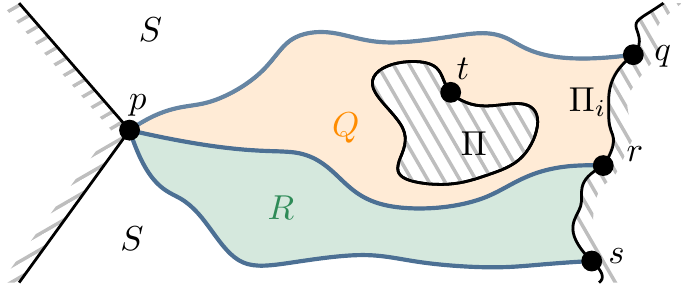}
	\caption{The shortest paths from 
	$p$ to $q$, $r$, $s$ (blue). The hole $\Pi$ contains $t$ 
	and lies in $Q$.}
	\label{fig:stretchedHoles}
\end{figure}

Suppose there is another boundary $\Pi$ that 
is stretched for $C_j$. Then, $\Pi$ must lie 
entirely in either $Q$, $R$, or $S$. We discuss 
the first case, the other two are symmetric. 
Since $\Pi$ is stretched for $C_j$, there is an index
$j' > j$ and a vertex $t \in C_{j'} \sqcap \Pi$. Consider 
the shortest path $\pi$ from $p$ to $t$. Since $j' > j$, the
first edge of $\pi$ lies in $R$ or $S$, and $\pi$ has to
cross or touch the shortest path from $p$ to $q$ or 
from $p$ to $r$. Furthermore, by definition, we 
have $C_j \cap C_{j'} = \{p\}$ and 
$C_{j_1}\cap C_{j'} = \{p\}$. Therefore, $p$ is 
the lowest common ancestor of all three shortest paths, 
and Observation~\ref{obs:noCrossing} leads to a contradiction.
\end{proof}

For $i = 0, \dots, h-1$, let $s(i)$ be the number of cones
in $\C$ for which $\Pi_i$ is stretched. By
Lemma~\ref{lem:stretchedHoles}, we get 
$\sum_{i=0}^{h-1} s(i) \leq \vert\C(p)\vert \in \Oe(1/\eps)$. 
Since $m(i) \leq s(i) + 2$, we conclude
\[
\vert\rho(p)\vert  \in\Oe\left(\sum_{i = 0}^{h -1 }m(i)\log n\right)
=\Oe\left(\sum_{i = 0}^{h - 1}(s(i) + 2)\log n\right)\\
 =\Oe\left((\vert\C(p)\vert + 2h)\log n\right)
=\Oe\left((1/\eps + h)\log n\right).
\]

\subsection{The Stretch Factor}
Next, we bound the stretch factor. First, we prove that 
the distance to the target decreases after the first step. 
This will then give the bound on the overall stretch factor.

\begin{lemma}
\label{lem:polyStretch1}
Let $p$ and $q$ be two vertices in $P$. Let $s$ be the 
next vertex computed by the routing scheme for a
data packet from $p$ to $q$. 
Then, $d(s, q) \leq d(p, q) - \abs{ps}/(1 + \eps)$.
\end{lemma}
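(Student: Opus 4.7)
The plan is to reduce the statement to Lemma~\ref{lem:Yao} by tracking the first edge of the shortest path from $p$ to $q$, and then to show that the constant $1 - 2\sin(\pi/t)$ produced by that lemma is at least $1/(1+\eps)$ for our choice of $t$ in equation~(\ref{equ:defT}).

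Concretely, let $q'$ be the vertex immediately following $p$ on the (unique) shortest path $\pi$ from $p$ to $q$ in $P$. Then $\{p,q'\}$ is an edge of $\VG(P)$, and there is a unique cone $C_j \in \C(p)$ that contains the segment $\overline{pq'}$. By the definition of $C_j \sqcap \Pi_i$, the vertex $q$ belongs to $C_j \sqcap \Pi_i$ (where $\Pi_i$ is the boundary containing $q$), so the routing scheme, when asked to forward a packet destined for $q$, picks exactly the entry associated with $C_j$ and sends the packet to the vertex $s$ that $C_j$ stores, namely the vertex of $P$ that lies in $C_j$, is visible from $p$, and is closest to $p$. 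Hence Lemma~\ref{lem:Yao}, applied with $q$ in that lemma instantiated as $q'$, yields
\[
d(s,q') \;\leq\; \abs{pq'} - \bigl(1 - 2\sin(\pi/t)\bigr)\abs{ps}.
\]

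Next I would combine this with the structure of the shortest path. Since $q'$ is the first vertex on $\pi$, we have $d(p,q) = \abs{pq'} + d(q',q)$, and the triangle inequality in the shortest-path metric gives $d(s,q) \leq d(s,q') + d(q',q)$. Adding these together eliminates $\abs{pq'}$ and produces
\[
d(s,q) \;\leq\; d(p,q) - \bigl(1 - 2\sin(\pi/t)\bigr)\abs{ps}.
\]

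It then remains to verify $1 - 2\sin(\pi/t) \geq 1/(1+\eps)$. From~(\ref{equ:defT}) we have $\pi/t \leq \arcsin\!\bigl(1/(2(1+1/\eps))\bigr)$, so $2\sin(\pi/t) \leq 1/(1+1/\eps) = \eps/(1+\eps)$, and subtracting from $1$ gives the required bound. The main obstacle, to the extent there is one, is the first step: making sure that the cone the routing table consults is actually the cone containing the first shortest-path edge $\overline{pq'}$, so that Lemma~\ref{lem:Yao} applies with $q'$ as its $q$. Once this identification is made, the remainder is just the triangle inequality and a direct substitution of the defining expression for $t$.
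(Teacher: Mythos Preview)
Your proposal is correct and follows essentially the same approach as the paper: identify the first shortest-path vertex $q'$ as lying in the same cone as the forwarded vertex $s$, apply Lemma~\ref{lem:Yao} with $q'$ in the role of that lemma's $q$, add $d(q',q)$ via the triangle inequality, and finish by substituting the definition of $t$ to convert $1-2\sin(\pi/t)$ into $1/(1+\eps)$. If anything, you spell out the justification for why the routing table selects the cone containing $\overline{pq'}$ and the arithmetic from~(\ref{equ:defT}) more explicitly than the paper does.
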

\begin{proof}
By construction of $\rho(p)$, we know that the 
next vertex $q'$ on the shortest path from $p$ to $q$ 
lies in the same cone as $s$. Hence, by the triangle
inequality and Lemma~\ref{lem:Yao}, we obtain
\begin{align*}
d(s, q) & \leq d(s, q') + d(q', q)
\leq \abs{pq'} - \left(1 - 2\sin(\pi/t)\right)\abs{ps} + d(q', q)\\
& = d(p,q) - \left(1 - 2\sin(\pi/t)\right)\abs{ps}
 \leq d(p,q)- \left(1 - \frac{1}{1 + 1/\eps}\right)\abs{ps} 
\tag*{(definition of $t$)}\\
& = d(p,q) - \abs{ps}/(1 + \eps),
\end{align*}
as desired.
\end{proof}
Lemma~\ref{lem:polyStretch1} immediately shows the 
correctness of the routing scheme: the distance 
to the target $q$ decreases strictly in each step and 
there is a finite number of vertices, so there is 
a $k = k(p, q) \leq n$ so that after $k$ steps,
the packet reaches $q$. Using this, we can now bound 
the stretch factor of the routing scheme.

\begin{lemma}\label{lem:stretch}
Let $p$ and $q$ be two vertices of $P$. Then, 
$d_{\rho}(p, q)\leq(1 + \eps)d(p,q)$.
\end{lemma}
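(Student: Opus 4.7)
The plan is to apply Lemma~\ref{lem:polyStretch1} inductively along the routing path and telescope the resulting inequalities. Let $p = p_0, p_1, \dots, p_k = q$ be the routing path produced by the scheme; by the remark following Lemma~\ref{lem:polyStretch1}, we know $k$ is finite, so this path is well-defined.

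For each $i \in \{1, \dots, k\}$, I would apply Lemma~\ref{lem:polyStretch1} to the step from $p_{i-1}$ to $p_i$ (with target $q$). This gives
\[
  d(p_i, q) \leq d(p_{i-1}, q) - \abs{p_{i-1}p_i}/(1+\eps),
\]
which I rearrange as $\abs{p_{i-1}p_i} \leq (1+\eps)\bigl(d(p_{i-1},q) - d(p_i,q)\bigr)$.

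Summing these inequalities over $i = 1, \dots, k$, the right-hand side telescopes, and since $d(p_k, q) = d(q,q) = 0$, I obtain
\[
  d_{\rho}(p,q) \;=\; \sum_{i=1}^{k} \abs{p_{i-1}p_i}
  \;\leq\; (1+\eps)\bigl(d(p_0, q) - d(p_k, q)\bigr)
  \;=\; (1+\eps)\, d(p,q),
\]
which is the desired bound. There is no real obstacle here: Lemma~\ref{lem:polyStretch1} already encodes the per-step progress in exactly the form needed for a telescoping argument, so the proof is essentially a one-line summation once the routing path is set up.
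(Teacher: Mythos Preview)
Your proof is correct and follows essentially the same approach as the paper: apply Lemma~\ref{lem:polyStretch1} at each step, rearrange, and telescope the sum along the routing path. The only difference is cosmetic indexing (you sum over $i=1,\dots,k$ while the paper sums over $i=0,\dots,k-1$).
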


\begin{proof}
Let $\pi = p_0 p_1 \dots p_k$ be the routing path from 
$p = p_0$ to $q = p_k$. By Lemma~\ref{lem:polyStretch1}, 
we have $d(p_{i + 1},q) \leq d(p_i,q) - \abs{p_ip_{i+1}}/(1 + \eps)$. 
Thus, 
\[
d_{\rho}(p, q) = \sum_{i = 0}^{k - 1}\abs{p_ip_{i+1}}
\leq(1 + \eps)\sum_{i = 0}^{k - 1}\left(d(p_i, q) - 
d(p_{i + 1}, q)\right)\\
=(1 + \eps)\left(d(p_0, q) - d(p_k, q)\right)
=(1 + \eps)d(p, q),
\]
as claimed.
\end{proof}

\subsection{The Preprocessing Time}

Finally, we discuss the details of the preprocessing 
algorithm and its time complexity.

\begin{lemma}\label{lem:preprocess}
The preprocessing time for our routing scheme is 
$\Oe(n^2\log n + n/\eps)$ for polygonal domains and
$O(n^2 + n/\eps)$ for simple polygons.
\end{lemma}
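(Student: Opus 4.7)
The plan is to process each vertex $p$ of $P$ independently: compute a shortest path tree $T_p$ rooted at $p$ in $\VG(P)$, and then convert $T_p$ into the routing table $\rho(p)$ using only linear work plus the $O(1/\eps)$ cones. The shortest path tree computation will dominate the running time, and the rest of the preprocessing will match the additive $O(n/\eps)$ term.

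For the shortest path trees I would invoke known results. In a polygonal domain, the Hershberger--Suri algorithm computes the shortest path map from a single source in $O(n \log n)$ time; reading $T_p$ off at the $n$ vertices takes $O(n)$ additional time per source. For a simple polygon, one triangulates once in $O(n)$ time (Chazelle) and then applies the Guibas--Hershberger--Leven--Sharir--Tarjan funnel algorithm, which computes $T_p$ in $O(n)$ time per source. Summing over the $n$ choices of $p$ yields $O(n^2 \log n)$ for polygonal domains and $O(n^2)$ for simple polygons.

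Given $T_p$, I would build $\rho(p)$ in $O(n + 1/\eps)$ time as follows. First, construct the $t = O(1/\eps)$ rays bounding $\C(p)$ in $O(1/\eps)$ time. Since these rays subdivide the inner angle at $p$ into $t$ equal slices, the cone index of any direction can be read off by constant-time angular arithmetic. Next, traverse $T_p$ once in $O(n)$ time; for every vertex $q \ne p$, record the unique child $s$ of $p$ in $T_p$ that is an ancestor of $q$, along with the cone $C_j$ containing the edge $(p,s)$. During the same traversal, maintain for each cone the minimum $\abs{ps}$ over the children $s$ of $p$ in that cone; this yields the closest vertex $r$ required in $\rho(p)$, since any vertex visible from $p$ and nearest in $C_j$ must be a child of $p$ in $T_p$. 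Finally, for each boundary $\Pi_i$ I would walk its vertices in cyclic order in $O(n_i)$ time and, using Lemma~\ref{lem:intervals}, read off the maximal intervals $C_j \sqcap \Pi_i$ as maximal runs of boundary vertices whose first-edge cone is constant. Over all boundaries this is $O(n)$ work per source, and it produces at most $O(1/\eps + h)$ intervals per source by the bound of Section~\ref{sec:table_size}.

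The main obstacle is ensuring that the per-vertex post-processing is truly $O(n + 1/\eps)$ rather than $O(n\log n)$, so that it does not spoil the bound for simple polygons. The key observation enabling this is that the equal-angle partition of the inner angle lets every cone lookup be done in $O(1)$, and the interval endpoints on each hole are obtained greedily by a single cyclic scan; no sorting by angle is required. Summing $O(n + 1/\eps)$ over the $n$ choices of $p$ contributes $O(n^2 + n/\eps)$, which is absorbed into $O(n^2 \log n + n/\eps)$ for polygonal domains and gives exactly $O(n^2 + n/\eps)$ for simple polygons, as claimed.
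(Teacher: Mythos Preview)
Your proposal is correct and follows the same global outline as the paper: compute one shortest path tree per vertex (Hershberger--Suri for polygonal domains, Guibas \etal\ for simple polygons) and then spend $O(n+1/\eps)$ additional work per source to fill the routing table, giving the stated totals.

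Where you differ is in the interval-extraction step. The paper works \emph{cone by cone}: for each $C_j$ it collects the subtree vertices $V_j$, buckets them by hole in $O(|V_j|)$ time, and then recovers each cyclic interval $V_{j,i}$ by a prune-and-search that locates one missing index. You instead work \emph{boundary by boundary}: having tagged every vertex with the cone of its first shortest-path edge (in $O(1)$ per vertex, thanks to the equal-angle partition), you scan each $\Pi_i$ cyclically and read off maximal runs of constant cone. Since Lemma~\ref{lem:intervals} guarantees that each $C_j\sqcap\Pi_i$ is a single cyclic interval and the cones partition the vertices of $\Pi_i$, these runs are exactly the desired intervals. Your route is arguably simpler and avoids the prune-and-search entirely; the paper's route, on the other hand, never needs to know the cyclic order of the hole vertices during the extraction and treats each cone as an independent subproblem. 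Both achieve the same $O(n+1/\eps)$ bound per source.
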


\begin{proof}
Let $p$ be a vertex of $P$. We compute the shortest path tree $T$ for $p$. 
In polygonal domains, this takes $O(n\log n)$ time using the
algorithm of Hershberger and Suri~\cite{HershbergerSu99}, 
and in simple polygons, this needs $O(n)$ time, using 
the algorithm of Guibas~\etal~\cite{GuibasHeLeShTa87}.
We perform a circular sweep around $p$ to find
for each cone $C_j \in \mathcal{C}$ the set $X_j$ of 
the children of $p$ in $T$ that lie in $C_j$. 
This requires $O(n + 1/\eps)$ steps.

For each cone $C_j$,
we find the child $r \in X_j$ that is closest to $p$.
We traverse all subtrees of $T$ that are rooted at some
child in $X_j$, and we collect the set $V_j$ of all their vertices.
We group the vertices in $V_j$ according to the hole boundaries they
belong to. This takes $O(|V_j|)$ time, using the following \emph{bucketing
scheme}: once for the whole algorithm,
we set up an array $B$ of buckets with $h$ entries, one for each
hole boundary. Each bucket consists of a linked list, initially empty. 
This gives a one-time initialization cost of $O(h)$.
When processing the vertices of $V_j$, we create a linked list
$N$ of \emph{non-empty} buckets, also initially empty. 
For each $v \in V_j$, we add $v$ into its corresponding
bucket $B[i]$. If $v$ is the first vertex in $B[i]$, we
add $i$ to $N$. This takes $O(|V_j|)$ time
in total, and it leads to the desired grouping of $V_j$.
Once we have processed $V_j$, we use $N$
in order to reset all the buckets we used to empty, in 
another $O(|V_j|)$ steps.

Now, for each hole $i$, let $V_{j,i}$ be the set of all vertices
on $\Pi_i$ that lie in $V_j$. By 
Lemma~\ref{lem:intervals}, $V_{j,i}$ is a cyclic interval.
To determine its endpoints, it suffices
to identify one vertex on hole $i$ that is not
in $V_{j,i}$ (if it exists). After that, a simple scan over $V_{j,i}$
gives the desired interval endpoints in $O(|V_{j,i}|)$ additional time.
To find this vertex in $O(|V_{j,i}|)$ time, we use prune and search:
let $L = \{ p_{i,k} \in V_{j,i} \mid k < \lceil n_i/2 \rceil\}$
and $R = V_{j,i} \setminus L$. 
We determine $|L|$ and $|R|$ by scanning $V_{j,i}$,
and we distinguish three cases. First, if $|L|=\lceil n_i/2 \rceil$ and
$|R|=\lfloor n_i/2 \rfloor$, all vertices of hole $i$ lie
in the $V_j$, and we are done.
Second, if $|L|<\lceil n_i/2 \rceil$ and $|R|<\lfloor n_i/2 \rfloor$,
then at least one of $p_{i,0}$, $p_{i, \lceil n_i/2 \rceil - 1}$,
$p_{i, \lceil n_i/2 \rceil}$, and $p_{i,n_i-1}$ is not in $V_{j,i}$.
Another scan over $V_{j,i}$ reveals which one it is.
In the third case, exactly one of the two sets $L$, $R$ contains
all possible vertices, whereas the other one does not. We recurse on the
latter set.  This set contains at most $|V_{j,i}|/2$
elements, so the overall running time for the recursion is 
$O(|V_{j,i}|)$.

It follows that we can handle a single cone $C_j$ in time $O(|V_j|)$,
so the total time for processing $p$ is $O(n\log n + 1/\eps)$ in 
polygonal domains and $O(n + 1/\eps)$ in simple polygons.
Since we repeat for each vertex of $P$, the claim follows.
\end{proof}

Combining the last two lemmas with 
Section~\ref{sec:routinschemePolygons}, we get our main theorem.

\begingroup
\def\thetheorem{\ref{mainThm}}
\begin{theorem}
Let $P$ be a polygonal domain with $n$ vertices and $h$ holes. 
For any $\eps > 0$ we can construct a routing scheme for 
$\VG(P)$ with labels of $\Oe(\log n)$ bits and routing tables of 
$\Oe((1/\eps + h)\log n)$ bits per vertex. For any two sites 
$p, q\in P$, the scheme produces a routing path with stretch 
factor at most $1 + \eps$. The preprocessing time is 
$\Oe(n^2\log n)$. If $P$ is a simple polygon, the preprocessing
time reduces to $O(n^2)$.
\end{theorem}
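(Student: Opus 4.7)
The plan is to assemble the ingredients already developed in Sections~\ref{sec:routinschemePolygons} and~\ref{sec:table_size}, together with the lemmas of the analysis, since by this point every substantive claim has been proved separately. I would open the proof by fixing the scheme from Section~\ref{sec:routinschemePolygons}: labels are the concatenations of the binary representations of the boundary index $i$ and the vertex position $k$ on that boundary, each routing table stores one entry $(i, k_1, k_2, \ell(r))$ per pair $(C_j, \Pi_i)$ with $C_j \sqcap \Pi_i \neq \emptyset$, and the routing function performs the obvious lookup.

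Next, I would verify the four parameters demanded by the theorem in order. The label bound is immediate: each label has $\lceil \log h\rceil + \lceil \log n\rceil = \Oe(\log n)$ bits. For the routing-table bound, I would invoke the calculation closing Section~\ref{sec:table_size}, which combines Lemma~\ref{lem:stretchedHoles} (at most one boundary is stretched per cone, so $\sum_i s(i) \leq |\C(p)|$) with Lemma~\ref{lem:epsilonBound} ($|\C(p)| = \Oe(1/\eps)$) and the fact that each entry occupies $\Oe(\log n)$ bits, yielding $|\rho(p)| = \Oe((1/\eps + h)\log n)$ as required.

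For the stretch guarantee and correctness, I would first appeal to Lemma~\ref{lem:polyStretch1}: since the distance to the target strictly decreases by at least $|ps|/(1+\eps) > 0$ at every step, the routing process terminates after finitely many hops. Then Lemma~\ref{lem:stretch} gives the telescoping bound $d_\rho(p,q) \leq (1+\eps)\,d(p,q)$ directly. For the preprocessing time, I would cite Lemma~\ref{lem:preprocess}, which gives $\Oe(n^2\log n)$ for general polygonal domains (using the Hershberger--Suri shortest-path tree algorithm) and $O(n^2)$ for simple polygons (using Guibas \etal); the additive $\Oe(n/\eps)$ term from that lemma is absorbed into the squared terms, since if $1/\eps > n$ the routing-table bound already forces $\eps$ to effectively be $\Theta(1/n)$ at the resolution of the scheme.

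Because every nontrivial step has already been discharged by an explicit lemma, I do not anticipate a genuine obstacle here; the proof is an assembly. The only minor care point will be bookkeeping the rounding when $h \in \{0,1\}$ in the label-length expression, and being explicit that the $O(n/\eps)$ additive preprocessing term is dominated, both of which are routine.
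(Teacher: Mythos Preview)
Your proposal is correct and matches the paper's proof almost exactly: both simply assemble the label discussion, the table-size calculation of Section~\ref{sec:table_size}, Lemma~\ref{lem:stretch}, and Lemma~\ref{lem:preprocess}. The one place the paper is crisper is in absorbing the $O(n/\eps)$ preprocessing term---rather than your ``resolution of the scheme'' remark, it observes that one may assume $\eps = \Omega(1/n)$, since otherwise the table budget $O((1/\eps+h)\log n)$ already accommodates a full shortest-path tree per vertex and the theorem is trivial.
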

\addtocounter{theorem}{-1}
\endgroup

\begin{proof}
First, note that we may assume that $\eps = \Omega(1/n)$, otherwise,
the theorem follows trivially from storing a complete shortest
path tree in each routing table.
Thus, $1/\eps = O(n)$, and by Lemma~\ref{lem:preprocess}, the 
preprocessing time is
$O(n^2\log n)$ for polygonal domains, and $O(n^2)$ for simple polygons.
The claim on the label size follows from the discussion at the beginning
of Section~\ref{sec:routinschemePolygons}, the size of the routing tables
is given in Section~\ref{sec:table_size}, and the stretch factor is proved
in Lemma~\ref{lem:stretch}.
\end{proof}


\section{Conclusion}
\label{sec:discuss}

We gave an efficient routing scheme for the visibility 
graph of a polygonal domain. Our scheme produces routing 
paths whose length can be made arbitrarily close to the optimum.

Several open questions remain. 
First of all, we would like to obtain
an efficient routing scheme
for the \emph{hop-distance} in polygonal domains $P$, 
where each edge of $\VG(P)$ has unit weight. 
This scenario occurs for routing in a wireless 
network: here, the main overhead is caused by forwarding a packet
at a base station, whereas the distance that the packet has to 
cross is negligible for the travel time.
For our routing scheme, 
we can construct examples where
the stretch factor is $\Omega(n)$; see Figure~\ref{fig:hopExample}. 
Moreover, it would be interesting to improve the preprocessing 
time or the size of the routing tables,
perhaps using a recursive strategy.
\begin{figure}[htbp]
  \centering
  \includegraphics{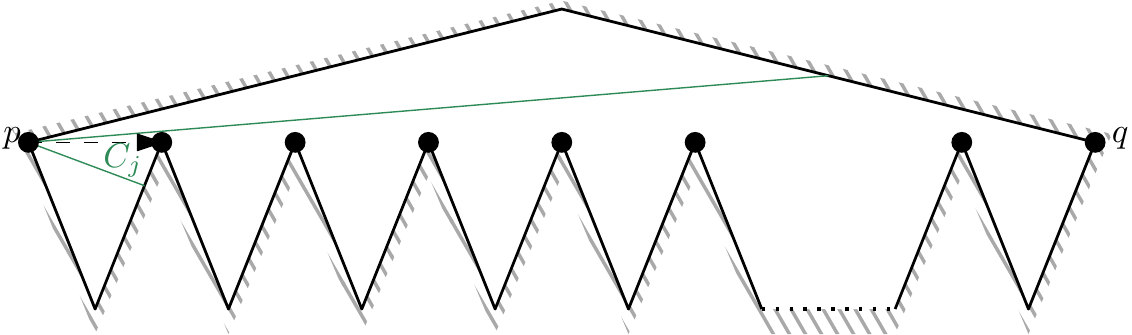}
  \caption{In this polygon, $p$ and $q$ can see each other, so
     their hop-distance is $1$. Our routing scheme 
     routes from one spire to the next, giving stretch factor $\Theta(n)$.}
   \label{fig:hopExample}
\end{figure}

A final open question concerns routing schemes in general: 
how do we model the time needed by a data packet to travel through the 
graph, including the processing times at the vertices? 
In particular, it would be interesting to consider a model
in which each vertex has a fixed \emph{processing time}
until it knows the next vertex for the current packet. This would lead
to a sightly different, but important, measure for routing schemes.

\bibliographystyle{abbrv}
\bibliography{sources}

\end{document}